\def\BibTeX{{\rm B\kern-.05em{\sc i\kern-.025em b}\kern-.08em
    T\kern-.1667em\lower.7ex\hbox{E}\kern-.125emX}}
\newtheorem{thm}{Theorem }%
\newtheorem{proposition}{Proposition}%
\newtheorem{lemma}{Lemma}%
\newtheorem{defn}{Definition}%
\newtheorem{rem}{Remark }%
\newtheorem{example}{Example}[section]
\newcommand{\set}[1]{\left\{#1\right\}}
\newcommand{\ra}{\rightarrow}
\newcommand{\Real}{\mathbb{R}}
\newcommand{\eps}{\varepsilon}
\renewcommand{\subset}{\subseteq}
\newcommand{\C}{\mathcal{C}}
\newcommand{\X}{\mathcal{X}}
\newcommand{\U}{\mathcal{U}}
\newcommand{\Ubox}{\mathcal{U}_{\text{box}}}
\newcommand{\Uball}{\mathcal{U}_{\text{ball}}}
\title{\LARGE \bf
Formal Verification of Control Lyapunov-Barrier Functions\\ for Safe Stabilization with Bounded Controls
}
\author{Jun Liu  %
\thanks{This research was supported in part by the Natural Sciences and Engineering Research Council of Canada and the Canada Research Chairs Program. The author also acknowledges the Digital Research Alliance of Canada and the Mathematics Faculty Computing Facility (MFCF) at the University of Waterloo for computing support.}%
\thanks{Jun Liu is with the Department of Applied Mathematics, University of Waterloo, Waterloo, Ontario N2L 3G1, Canada.  Email: \texttt{j.liu@uwaterloo.ca (Jun Liu)}
        }%
}
\begin{document}

\maketitle
\thispagestyle{empty}
\pagestyle{empty}

\begin{abstract}
We present verifiable conditions for synthesizing a single smooth Lyapunov function that certifies both asymptotic stability and safety under bounded controls. These sufficient conditions ensure the strict compatibility of a control barrier function (CBF) and a control Lyapunov function (CLF) on the exact safe set certified by the barrier. An explicit smooth control Lyapunov-barrier function (CLBF) is then constructed via a patching formula that is provably correct by design. Two examples illustrate the computational procedure, showing that the proposed approach is less conservative than sum-of-squares (SOS)-based compatible CBF-CLF designs.
\end{abstract}
\begin{keywords}
Safety; Stability; Formal verification; Control Lyapunov function; Control barrier function; Smooth patching; Bounded controls.
\end{keywords}

\section{Introduction}

Ensuring both safety and stability is a central challenge in modern control design. While classical control theory focuses primarily on stabilization, many real-world systems, such as autonomous vehicles \cite{seo2022safety}, robotic systems \cite{li2024stabilization}, and industrial processes \cite{wu2020control}, must also operate safely within prescribed state and input limits. The need for controllers that enforce safety without compromising stability has therefore become a key research direction in recent years.

Control Lyapunov functions (CLFs) \cite{artstein1983stabilization,sontag1983lyapunov,sontag1989universal} offer a principled way to certify stability, while control barrier functions (CBFs) \cite{ames2016control,wieland2007constructive,xu2015robustness} provide formal safety guarantees by maintaining forward invariance of safe sets. Combining these two constructs has led to a growing body of work on safe stabilization and reach-avoid control; see, e.g., \cite{romdlony2016stabilization,meng2022smooth,mestres2025conversetheoremscertificatessafety,li2024stabilization,dawson2022safe,li2023graphical,ong2019universal,mestres2022optimization,quartz2025converse}. However, the coexistence of safety and stability constraints introduces a fundamental tension: a CLF that drives the system toward an equilibrium may require control actions that violate the CBF condition, and vice versa. Optimization-based methods, such as quadratic program (QP) formulations, have become popular for handling this trade-off in real time, yet their feasibility critically depends on the compatibility  of the CBF-CLF pair \cite{mestres2022optimization}. When such compatibility fails, controllers often rely on heuristic relaxations that weaken provable guarantees.

An alternative paradigm is to construct a single smooth function that simultaneously encodes both safety and stability, known as a control Lyapunov-barrier function (CLBF)~\cite{romdlony2016stabilization}. This unification removes the need for online optimization and provides explicit certificates and control laws for safe stabilization. Despite its appeal, constructing such a function is nontrivial. Existing CLBF formulations~\cite{romdlony2016stabilization} typically rely on strong structural assumptions about the safe set which may not be satisfiable~\cite{braun2017existence,braun2020comment}. 

It is therefore of interest to determine under what conditions a single smooth CLBF can exist. The work in~\cite{meng2022smooth} investigated this question through converse Lyapunov theorems, following the idea introduced in~\cite{liu2020converse}, and characterized robust safe stability using Lyapunov functions. Building on~\cite{meng2022smooth}, subsequent results in  \cite{mestres2025conversetheoremscertificatessafety} broadened the scope by deriving necessary conditions for the existence of CLBFs and for the compatibility of CBF-CLF pairs. Complementary to these developments,~\cite{quartz2025converse} established that a strictly compatible pair of control Lyapunov and control barrier functions exists if and only if there is a single smooth Lyapunov function that simultaneously certifies asymptotic stability and safety. Inspired by this line of work,~\cite{liu2025computing} introduced a computational framework that unifies strictly compatible CBFs and CLFs into a single CLBF for safe stabilization, but without explicitly accounting for input constraints. 

In this work, we extend the results of~\cite{liu2025computing} and develop verifiable conditions for synthesizing a single smooth Lyapunov function that certifies both asymptotic stability and safety under bounded controls. These conditions are derived using reformulations of Farkas' lemma to encode compatibility constraints under two types of input bounds, norm-bounded and hyperbox constraints, which can be readily verified using satisfiability modulo theories (SMT) solvers~\cite{gao2013dreal}. Once verified, a provably correct smooth control Lyapunov-barrier function (CLBF) can be constructed following~\cite{liu2025computing}. The resulting CLBF enables the use of the universal formulas of Lin and Sontag~\cite{lin1991universal} and Leyva \emph{et al.}~\cite{leyva2013global} to compute safe stabilizing controllers subject to input constraints. We demonstrate the computational procedure on two nonlinear examples, showing that the proposed method is less conservative than sum-of-squares (SOS)-based compatible CBF-CLF designs.

\section{Preliminaries and problem formulation}

\subsection{System description}

We consider a nonlinear control-affine system: 
\begin{equation}
    \label{eq:sys}
    \dot{x} = f(x) + g(x)u,
\end{equation}
where \( f:\mathbb{R}^n \to \mathbb{R}^n \) and \( g:\mathbb{R}^n \to \mathbb{R}^{n \times m} \) satisfy \( f(0) = 0 \).  
The objective is to design a feedback control law \( u = \kappa(x) \), with \( \kappa:\mathbb{R}^n \to \mathbb{R}^m \) and \( \kappa(0) = 0 \), such that the closed-loop dynamics  
\begin{equation}
    \label{eq:clsys}
    \dot{x} = f(x) + g(x)\kappa(x),
\end{equation}
preserve the origin as an equilibrium point and its solutions meet the following requirements:
\begin{enumerate}[(i)]
    \item for every initial state \(x(0)\in\mathcal{X}\subseteq\mathbb{R}^n\), the trajectory \(x(t)\) is well-defined and remains in \(\mathcal{X}\) for all \(t\ge 0\);
    \item the input satisfies \(\kappa(x(t))\in\mathcal{U}\subseteq\mathbb{R}^m\) for all \(t\ge 0\); 
    \item the origin is asymptotically stable.
\end{enumerate}

In this work, we focus on achieving (i)-(iii) under prescribed sets \(\mathcal{X}\) and \(\mathcal{U}\), which are described below.

\subsection{State constraints}

We consider multiple state constraints defined by  
\begin{equation}
\label{eq:constraint}
\mathcal{C}_i = \left\{ x \in \mathbb{R}^n \mid h_i(x) \le 1 \right\}, \quad i = 1, \ldots, N,
\end{equation}
where each function \( h_i:\mathbb{R}^n \to \mathbb{R} \) is continuously differentiable.  
The overall safe region is specified as the intersection of these individual constraint sets as 
\(
\mathcal{X}  = \bigcap_{i=1}^N \mathcal{C}_i. 
\)
Equivalently, the safe set can be characterized through the pointwise maximum function as
\begin{equation}
\label{eq:hmax}
\mathcal{X} = \left\{ x \in \mathbb{R}^n \mid h_{\max}(x) \le 1 \right\},
\end{equation}
with \( h_{\max}(x) := \max_{1 \le i \le N} h_i(x) \).  
Note that \( h_{\max} \) is generally nonsmooth and fails to be continuously differentiable in regions where multiple \( h_i \) attain the same value. We assume that the origin lies in the interior of $\mathcal X$. 

\subsection{Input constraints}

We consider two types of input constraints. The first is specified by a 2-norm bound of the form
\begin{equation}
    \label{eq:unorm}
    \mathcal{U}_{\text{ball}} = \left\{ u \in \mathbb{R}^m \mid \|u\|_2 \le \bar{u} \right\},
\end{equation}
where \(\bar{u} > 0\) denotes the maximum allowable control magnitude.

The second type is a hyperbox constraint, defined as
\begin{equation}
    \label{eq:ubox}
    \mathcal{U}_{\text{box}} = \left\{ u \in \mathbb{R}^m \mid \underline{u}_j \le u_j \le \overline{u}_j,\; j = 1, \ldots, m \right\},
\end{equation}
where \(\underline{u}_j, \overline{u}_j \in \mathbb{R}\), \(j = 1, \ldots, m\), denote the lower and upper bounds on each control component. We consider either $\U=\Uball$ or $\Ubox$ as the admissible control domain.

\subsection{Problem formulation}\label{sec:problem}

We adopt a unified CLF framework for safe stabilization under bounded inputs. Specifically, we aim to construct a continuously differentiable function \( W:\mathbb{R}^n \to \mathbb{R} \) that is positive definite, $W(0)=0$, and satisfies 
    \begin{equation}
        \inf_{u \in \U} \left[ L_f W(x) + L_g W(x)u \right] < 0,  
        \quad \forall x \in \mathcal{C} \setminus \{0\},
    \end{equation}
    where \( L_f W = \nabla W^\top f \), \( L_g W = \nabla W^\top g \), and 
    \begin{equation}
        \label{eq:set_C}
        \mathcal{C} = \left\{ x \in \mathbb{R}^n \mid W(x) \le 1 \right\}\subset \mathcal{X}.
    \end{equation}
Ideally, \(\mathcal{C}\) should provide a close inner approximation of the safe set \(\mathcal{X}\).  
Once such a function \(W\) is available, universal formulas of Lin and Sontag~\cite{lin1991universal} for norm-bounded inputs and Leyva \emph{et al.}~\cite{leyva2013global} for hyperbox constraints can be applied directly to \(W\) to yield a safe, stabilizing controller.  This highlights one of the main advantages of employing a single smooth Lyapunov function to certify both stability and safety.

\section{Formal verification of CLBFs with bounded controls}

In this section, we present the main results of the paper. We adopt the guaranteed patching approach~\cite{liu2025computing} and establish formally verifiable conditions that enable the provably correct construction of a smooth control Lyapunov-barrier function for safe stabilization under bounded controls. 

\subsection{Verification of strict CBFs with bounded controls}

Consider a continuously differentiable function \( h:\mathbb{R}^n \to \mathbb{R} \) defining the set
\begin{align}
\mathcal{C} & = \left\{ x \in \mathbb{R}^n \mid h(x) \le 1 \right\}, \label{safe-set}\\
\partial \mathcal{C} & = \left\{ x \in \mathbb{R}^n \mid h(x) = 1 \right\}.
\end{align}
We seek to verify that \(h\) is a \emph{strict} CBF on \(\mathcal{C}\), i.e., for every \(x \in \partial \mathcal{C}\), 
\begin{equation}
    \label{eq:cbf}
    \inf_{u \in \U} \left[ L_f h(x) + L_g h(x) u \right] < 0,
\end{equation}
where $\U=\Uball$ or $\Ubox$, defined in (\ref{eq:unorm}) and (\ref{eq:ubox}), respectively.

\begin{proposition}\label{prop:cbf}
The following hold:
\begin{enumerate}
\item If $\mathcal{U}=\mathcal{U}_{\text{ball}}$, the CBF condition \eqref{eq:cbf} is equivalent to
    \begin{equation}
        \label{eq:cbf_ball}
        h(x)=1 \;\Longrightarrow\; L_f h(x) \;<\; \bar{u}\,\|L_g h(x)\|_2 .
    \end{equation}

\item If $\mathcal{U}=\mathcal{U}_{\text{box}}$, the CBF condition \eqref{eq:cbf} is equivalent to
\begin{equation}
    \label{eq:cbf_box}
\begin{aligned}
    &h(x)=1 \;\Longrightarrow\;\\
    &
    L_f h(x)
    + \sum_{j=1}^m \!\left(m_j\,(L_g h(x))_j
        - r_j\,\big|(L_g h(x))_j\big|
    \right) < 0,    
\end{aligned}
\end{equation}
where $m_j=\tfrac{\underline{u}_j+\overline{u}_j}{2}$ and $r_j=\tfrac{\overline{u}_j-\underline{u}_j}{2}$.
\end{enumerate}
\end{proposition}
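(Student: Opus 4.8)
The plan is to prove each equivalence by computing the infimum in \eqref{eq:cbf} explicitly, since the objective $L_f h(x) + L_g h(x)u$ is affine in $u$ and the admissible sets $\Uball$ and $\Ubox$ are compact. For any fixed $x$ with $h(x)=1$, the term $L_f h(x)$ is a constant, so the problem reduces to minimizing the linear functional $u \mapsto L_g h(x)\,u$ over the relevant constraint set, and the CBF condition holds precisely when the minimized value is strictly negative.

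For part (1), the ball case, I would invoke the Cauchy--Schwarz inequality: over $\|u\|_2 \le \bar u$, the linear functional $L_g h(x)\,u$ attains its minimum at $u^\star = -\bar u\, (L_g h(x))^\top / \|L_g h(x)\|_2$ (when $L_g h(x)\neq 0$), giving $\inf_{u\in\Uball} L_g h(x)\,u = -\bar u\,\|L_g h(x)\|_2$. Adding back $L_f h(x)$ and requiring the result to be negative yields exactly \eqref{eq:cbf_ball}. I would note that the degenerate case $L_g h(x)=0$ is handled consistently, since then the bound reduces to $L_f h(x)<0$, matching the infimum $L_f h(x)$.

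For part (2), the box case, the key observation is that the minimization separates coordinatewise: since $\Ubox$ is a product of intervals $[\underline u_j,\overline u_j]$, the infimum of $\sum_j (L_g h(x))_j u_j$ is $\sum_j \min_{u_j\in[\underline u_j,\overline u_j]} (L_g h(x))_j u_j$. Writing $u_j = m_j + t_j$ with $t_j\in[-r_j,r_j]$ (the midpoint--radius reparametrization), each coordinate minimum is $(L_g h(x))_j m_j - r_j\,|(L_g h(x))_j|$, since a linear function of $t_j$ over a symmetric interval is minimized at $t_j=-r_j\,\mathrm{sign}((L_g h(x))_j)$. Summing over $j$ and adding $L_f h(x)$ produces the left-hand side of \eqref{eq:cbf_box}, and negativity of this infimum is the desired condition.

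The main technical point to be careful about, rather than a genuine obstacle, is the direction of the equivalence: because both $\Uball$ and $\Ubox$ are compact, the infimum is attained, so \eqref{eq:cbf} (a strict infimum inequality) is equivalent to the explicit minimizer value being strictly negative. I would also confirm that the reparametrization $m_j,r_j$ is legitimate, which requires $\overline u_j\ge\underline u_j$ so that $r_j\ge 0$; this is implicit in the definition of $\Ubox$. No further case analysis is needed, and both equivalences follow from these closed-form evaluations.
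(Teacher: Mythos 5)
Your proof is correct and follows essentially the same route as the paper's: both reduce \eqref{eq:cbf} to evaluating the infimum of the linear term in closed form, using the dual-norm (Cauchy--Schwarz) bound $-\bar u\,\|L_g h(x)\|_2$ for the ball and coordinatewise separation for the box, where your midpoint--radius reparametrization is just an algebraic rephrasing of the paper's identity $\min\{a,b\}=\tfrac{a+b}{2}-\tfrac{|a-b|}{2}$. Your extra remarks on attainment and the degenerate case $L_g h(x)=0$ are harmless additions; note only that attainment is not actually needed, since $\inf_u F(u)<0$ holds if and only if some $u$ gives $F(u)<0$.
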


The proof is elementary and provided for completeness.

\begin{proof}
Fix $x$ with $h(x)=1$ and set $c:=L_f h(x)$ and $q:=L_g h(x)\in\mathbb{R}^{1\times m}$.

\emph{(1) Ball constraint.}
\[
\inf_{\|u\|_2\le \bar{u}} \big(c+ q u\big) < 0
\;\Longleftrightarrow\;
c + \inf_{\|u\|_2\le \bar{u}} \langle q, u\rangle < 0.
\]
Hence \eqref{eq:cbf_ball} follows from 
$\inf_{\|u\|_2\le \bar{u}} \langle q, u\rangle = -\bar{u}\|q\|_2$.

\emph{(2) Box constraint.}
\[
c + \inf_{u\in\mathcal{U}_{\text{box}}} \sum_{j=1}^m q_j u_j < 0,
\quad
\mathcal{U}_{\text{box}}=\prod_{j=1}^m [\underline u_j,\overline u_j].
\]
Separability of the box constraints in $u$ yields
\[
\inf_{u\in\mathcal{U}_{\text{box}}} \sum_{j=1}^m q_j u_j
= \sum_{j=1}^m \inf_{\underline{u}_j\le u_j\le \overline{u}_j} q_j u_j
= \sum_{j=1}^m \min\{q_j\underline{u}_j,\;q_j\overline{u}_j\}.
\]
Using $\min\{a,b\}=\tfrac{a+b}{2}-\tfrac{|a-b|}{2}$ with $a=q_j\underline u_j$, $b=q_j\overline u_j$,
\[
\min\{q_j\underline{u}_j,q_j\overline{u}_j\}
= \tfrac{\underline u_j+\overline u_j}{2}\,q_j - \tfrac{\overline u_j-\underline u_j}{2}\,|q_j|
= m_j q_j - r_j |q_j|.
\]
Thus $c+\sum_{j=1}^m (m_j q_j - r_j |q_j|)<0$, which is \eqref{eq:cbf_box}.
\end{proof}

Over a compact domain $\X$, both conditions (\ref{eq:cbf_ball}) and (\ref{eq:cbf_box}) can be readily verified by a $\delta$-complete SMT solver such as dReal \cite{gao2013dreal}. 

\subsection{Verification of strictly compatible CBF-CLF pairs with bounded controls}

Let $h:\mathbb{R}^n \to \mathbb{R}$ be a candidate CBF and $V:\mathbb{R}^n \to \mathbb{R}$ a candidate CLF.   
Assume that both $h$ and $V$ are continuously differentiable, $V$ is positive definite, and the set $\mathcal{C}$ defined in~\eqref{safe-set} contains the origin in its interior.  
We now introduce a strict compatibility condition between $h$ and $V$.

\begin{defn}[\cite{quartz2025converse}]\label{def:compatibility}
    We say that $h$ and $V$ are strictly compatible if the following conditions hold: 
    \begin{enumerate}
        \item For every $x \in \mathcal{C} \setminus \{0\}$, there exists $u\in\U$ such that  
    \begin{equation}\label{eq:CLFV}
        L_f V(x) + L_g V(x) u < 0.
    \end{equation}
        \item For every $x \in \partial \C$, there exists $u\in\U$ such that 
    \begin{equation}\label{eq:CLFBoundary}
        L_f V(x) + L_g V(x) u < 0,
    \end{equation}
    and
    \begin{equation}\label{eq:CBFBoundary}
        L_f h(x)+L_g h(x) u < 0.
    \end{equation}
    \end{enumerate}
\end{defn}

\subsubsection{Verification of strict CLF conditions with bounded controls}

The compatibility conditions above include a CLF condition on \(V\), i.e., (\ref{eq:CLFV}), which we address first.  
The following proposition is essentially the same as Proposition~\ref{prop:cbf}, with \(h\) replaced by \(V\) and \(\partial\mathcal{C}\) replaced by \(\mathcal{C}\setminus\{0\}\).

\begin{proposition}\label{prop:clf}
    The following hold:
\begin{enumerate}
\item If $\mathcal{U}=\mathcal{U}_{\text{ball}}$, the CLF condition \eqref{eq:CLFV} is equivalent to
    \begin{equation}
        \label{eq:clf_ball}
        (h(x)\le 1 \wedge x\neq 0)\;\Longrightarrow\; L_f V(x) \;<\; \bar{u}\,\|L_g V(x)\|_2 .
    \end{equation}

\item If $\mathcal{U}=\mathcal{U}_{\text{box}}$, the CLF condition \eqref{eq:CLFV} is equivalent to
\begin{equation}
    \label{eq:clf_box}
\begin{aligned}
    &(h(x)\le 1 \wedge x\neq 0) \;\Longrightarrow\;\\
    &
    L_f V(x)
    + \sum_{j=1}^m \!\left(m_j\,(L_g V(x))_j
        - r_j\,\big|(L_g V(x))_j\big|
    \right) < 0,    
\end{aligned}
\end{equation}
where $m_j=\tfrac{\underline{u}_j+\overline{u}_j}{2}$ and $r_j=\tfrac{\overline{u}_j-\underline{u}_j}{2}$.
\end{enumerate}
\end{proposition}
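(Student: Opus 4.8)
The plan is to observe that Proposition~\ref{prop:clf} is structurally identical to Proposition~\ref{prop:cbf} and reduces to it by a direct substitution. The key point is that both the CBF condition \eqref{eq:cbf} and the CLF condition \eqref{eq:CLFV} have exactly the same form: a strict inequality $\inf_{u\in\U}[L_f\phi(x)+L_g\phi(x)u]<0$ for a continuously differentiable function $\phi$. In Proposition~\ref{prop:cbf} we take $\phi=h$ and require the condition on the boundary set $\partial\mathcal{C}=\{h(x)=1\}$; here we take $\phi=V$ and require it on the punctured sublevel set $\mathcal{C}\setminus\{0\}=\{h(x)\le 1\wedge x\neq 0\}$. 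Since the inner minimization over $u$ is carried out pointwise for each fixed $x$ and is completely independent of which admissible set of $x$ we quantify over, the equivalences established in the proof of Proposition~\ref{prop:cbf} apply verbatim once we relabel $c:=L_f V(x)$ and $q:=L_g V(x)$.

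Concretely, I would fix an arbitrary $x$ satisfying the antecedent ($h(x)\le 1$ and $x\neq 0$) and set $c:=L_f V(x)$, $q:=L_g V(x)\in\mathbb{R}^{1\times m}$. For the ball case I invoke $\inf_{\|u\|_2\le\bar u}\langle q,u\rangle=-\bar u\|q\|_2$, giving $\inf_{u\in\Uball}(c+qu)<0\iff c<\bar u\|q\|_2$, which is precisely \eqref{eq:clf_ball}. For the box case I use separability of the hyperbox to write $\inf_{u\in\Ubox}\sum_j q_j u_j=\sum_j\min\{q_j\underline u_j,q_j\overline u_j\}$, then apply the identity $\min\{a,b\}=\tfrac{a+b}{2}-\tfrac{|a-b|}{2}$ with $a=q_j\underline u_j$, $b=q_j\overline u_j$ to obtain $\min\{q_j\underline u_j,q_j\overline u_j\}=m_j q_j-r_j|q_j|$, yielding \eqref{eq:clf_box}. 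Since $x$ was arbitrary in the admissible region, the pointwise equivalences aggregate into the stated implications.

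There is essentially no genuine obstacle here, which is exactly why the paper flags the proposition as ``essentially the same'' as Proposition~\ref{prop:cbf}. The only point requiring a word of care is that the quantification over $x$ has changed from the boundary $\partial\mathcal{C}$ to the punctured interior-plus-boundary set $\mathcal{C}\setminus\{0\}$; I would note that this change affects only the \emph{set of admissible $x$} and not the inner optimization over $u$, so the per-point argument is untouched. The exclusion of the origin matters only insofar as it guarantees $V$ may be strictly decreased there (consistent with positive definiteness and $V(0)=0$), but it plays no role in the algebraic reduction itself. Accordingly, the cleanest write-up simply remarks that the proof of Proposition~\ref{prop:cbf} goes through with $h$ replaced by $V$ and $\partial\mathcal{C}$ replaced by $\mathcal{C}\setminus\{0\}$, and I would present it that way rather than repeating the identical computation.
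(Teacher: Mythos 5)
Your proposal is correct and matches the paper's treatment exactly: the paper gives no separate proof for this proposition, stating only that it is ``essentially the same'' as Proposition~\ref{prop:cbf} with $h$ replaced by $V$ and $\partial\mathcal{C}$ replaced by $\mathcal{C}\setminus\{0\}$, which is precisely the substitution-and-relabeling argument you carry out. Your added remark that the change in the quantification set over $x$ leaves the pointwise inner optimization over $u$ untouched is the right (and only) point needing care, and it is handled correctly.
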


\begin{rem}
When using a $\delta$-complete SMT solver such as dReal~\cite{gao2013dreal}, spurious counterexamples may arise near the origin when verifying (\ref{eq:clf_ball}) and (\ref{eq:clf_box}).  
To mitigate this, we follow the approach of~\cite{liu2025formally,liu2025physics}, which isolates the near-origin region and verifies that the dominant homogeneous parts of the CLF and the dynamics ensure local stabilization.  
For linearly stabilizable dynamics and a quadratic CLF $V(x)=x^\top P x$, this reduces to verifying that $V$ serves as a valid Lyapunov function for the closed-loop system under a linear feedback controller $u=Kx$, as detailed in~\cite{liu2025physics,liu2025formally}.  
We adopt the same approach here, with the additional step of performing the verification on a sufficiently small sublevel set of $V$ to ensure that the control input $u=Kx$ remains within the bounds imposed by $\Uball$ or $\Ubox$.  
This verified inner level set is then used as a base region for subsequent CLF verification under bounded controls, by checking (\ref{eq:clf_ball}) and (\ref{eq:clf_box}) outside this base region.
\end{rem}

\subsubsection{Verification of strict CLF-CBF compatibility with bounded controls}

We now focus on the strict compatibility conditions~\eqref{eq:CLFBoundary} and~\eqref{eq:CBFBoundary} under bounded controls. Because these conditions involve the quantifier structure $\forall x\,\exists u$, we employ variants of Farkas' lemma to obtain quantifier-free reformulations that can be handled by SMT solvers such as dReal~\cite{gao2013dreal}. Depending whether $\U=\Ubox$ or $\Uball$, we use either the linear or conic version of Farkas' lemma.

\begin{lemma}[Farkas' Lemma]\label{lem:strictfarkas}
Let $A \in \Real^{n \times m}$ and $b \in \Real^n$.  
The system $A u < b$, $u \in \Real^m$, has a solution if and only if every nonnegative 
$\lambda \in \Real^n$ with $\lambda^\top A = 0^\top \in \Real^{1 \times m}$ and  
$\sum_{i=1}^n \lambda_i=1$ also satisfies $\lambda^\top b > 0$.
\end{lemma}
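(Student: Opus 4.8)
The plan is to read the lemma as a theorem of the alternative and prove the two implications separately, with essentially all the work living in the sufficiency direction. Throughout I would keep in mind that the hypothesis $\sum_{i} \lambda_i = 1$ is only a normalization that excludes $\lambda = 0$: any nonzero $\lambda \ge 0$ can be rescaled to have unit component sum without changing either $\lambda^\top A = 0$ or the sign of $\lambda^\top b$. This observation is worth isolating up front, since it is precisely the normalization that makes the quantified condition non-vacuous (without it, $\lambda = 0$ would spuriously violate $\lambda^\top b > 0$).

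\emph{Necessity.} First I would dispatch the easy direction: if some $u^\star$ satisfies $A u^\star < b$, then every admissible $\lambda$ satisfies $\lambda^\top b > 0$. Fixing $\lambda \ge 0$ with $\lambda^\top A = 0$ and $\sum_i \lambda_i = 1$, I would left-multiply the componentwise strict inequality $A u^\star < b$ by $\lambda^\top$. Because $\lambda \ge 0$ and at least one $\lambda_i > 0$ (forced by the normalization), the nonnegative combination is strict, giving $\lambda^\top A u^\star < \lambda^\top b$; since $\lambda^\top A = 0$ the left side vanishes, so $\lambda^\top b > 0$. The only subtlety is confirming that strictness survives the nonnegative combination, which it does exactly because $\lambda \ne 0$.

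\emph{Sufficiency.} For the converse I would argue by contraposition: assuming $A u < b$ is infeasible, I would construct an admissible $\lambda$ with $\lambda^\top b \le 0$. The pivotal step is a geometric reformulation: $A u < b$ is solvable iff the affine set $b - \operatorname{range}(A) = b + \operatorname{range}(A)$ meets the open positive orthant $\mathbb{R}^n_{>0}$. Infeasibility thus means these two disjoint convex sets, one of them open, admit a separating hyperplane, yielding $\lambda \ne 0$ and $\alpha$ with $\lambda^\top z \ge \alpha$ on the orthant and $\lambda^\top y \le \alpha$ on the affine set. Boundedness below of $\lambda^\top z$ over $\mathbb{R}^n_{>0}$ forces $\lambda \ge 0$ and $\alpha \le 0$; boundedness above of $\lambda^\top y$ over an affine set through $b$ with direction space $\operatorname{range}(A)$ forces the linear functional to vanish on that subspace, i.e. $\lambda^\top A = 0$, whence $\lambda^\top b \le \alpha \le 0$. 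Normalizing $\lambda$ then produces exactly the negation of the Farkas condition. (Alternatively one could homogenize the strict system and invoke the classical non-strict Farkas lemma, but the separation route keeps the role of strictness most transparent.)

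\emph{Main obstacle.} I expect the delicate point to be the sufficiency direction, specifically extracting the correct sign data from the separating functional. Because the orthant is unbounded, I must use that mere boundedness below of $\lambda^\top z$ already pins down $\lambda \ge 0$, and because $\operatorname{range}(A)$ is a subspace rather than a general convex set, I must use that a linear functional bounded on a subspace must annihilate it. Securing $\lambda \ne 0$ and matching the normalization $\sum_i \lambda_i = 1$ in the statement is then routine, provided I flag explicitly that this normalization is what rules out the trivial multiplier and renders the $\forall x\,\exists u$-style condition meaningful.
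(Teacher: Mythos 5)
Your proof is correct in both directions. Note, however, that the paper does not actually prove this lemma in-text: it defers the proof to \cite{liu2025computing}, so the only in-paper point of comparison is the appendix proof of the conic variant, Lemma~\ref{lem:conic_farkas}. Your argument follows exactly that proof's blueprint, specialized to the purely linear case: recast infeasibility of $Au<b$ as disjointness of two convex sets (for you, the affine set $b+\operatorname{range}(A)$ and the open orthant $\mathbb{R}^n_{>0}$; in the appendix, the sets $C$ and $K$), invoke the separating hyperplane theorem, exploit unboundedness of each set to force the sign condition $\lambda\ge 0$ and the annihilation condition $\lambda^\top A=0$ on the separating functional, conclude $\lambda^\top b\le \alpha \le 0$, and normalize only at the very end to meet the constraint $\sum_i\lambda_i=1$. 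The two points you flagged as delicate are indeed the right ones, and you handle both correctly: in the necessity direction, strictness of $\lambda^\top(b-Au^\star)>0$ survives the nonnegative combination precisely because the normalization excludes $\lambda=0$; in the sufficiency direction, boundedness below of $\lambda^\top z$ on the orthant pins down $\lambda\ge0$ and (letting $z\to0$) $\alpha\le 0$, while boundedness above of $\lambda^\top(b+Av)$ over all $v$ forces $A^\top\lambda=0$ because a linear functional bounded on a subspace must vanish on it. Your write-up would serve as a self-contained replacement for the external citation, and it is methodologically consistent with how the paper proves its own conic generalization.
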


Here we consider a strict inequality formulation and add a normalization to $\lambda$ so that the equivalent condition can be readily verified by SMT solvers such as dReal \cite{gao2013dreal}. A proof of this variant of Farkas' Lemma can be found in \cite{liu2025computing}. Furthermore, to cope with norm-bounded controls, we use the following lemma. The proof can be found in the Appendix. %

\begin{lemma}[Second-Order Conic Farkas' Lemma]\label{lem:conic_farkas}
Let $A \in \Real^{n \times m}$, $b \in \Real^n$, and $B>0$.  
The system $A u < b$ has a solution satisfying $\|u\|_2 \le B$ if and only if for every $\lambda \in \Real^n_{\ge 0}$, $\tau \in \Real_{\ge 0}$, and $y \in \Real^m$ satisfying 
$y + A^\top \lambda = 0$, $\|y\|_2 \le \tau$, and $\sum_{i=1}^n \lambda_i + \tau = 1$ also satisfies $\lambda^\top b + B \tau > 0.$
\end{lemma}

Based on Lemmas~\ref{lem:strictfarkas} and~\ref{lem:conic_farkas}, we derive the following equivalent formulations of the compatibility conditions~\eqref{eq:CLFBoundary} and~\eqref{eq:CBFBoundary} under bounded controls. These formulations can be directly verified using an SMT solver.

\begin{proposition}
\label{prop:clbf-compat-bounded-control}
The following hold:
\begin{enumerate}[(i)]
\item If $\U=\Uball$, the strict compatibility conditions \eqref{eq:CLFBoundary}--\eqref{eq:CBFBoundary} are equivalent to 
\begin{align}
&((h(x)=1) \notag\\
&\wedge (\lambda_1,\lambda_2,\tau \ge 0,\; \lambda_1+\lambda_2+\tau=1)  \notag\\
&\wedge (\,L_g V(x)^\top \lambda_1 + L_g h(x)^\top \lambda_2 + y = 0,\; \|y\|_2 \le \tau\,))  \notag\\
&\Longrightarrow\;
\lambda_1 L_f V(x) + \lambda_2 L_f h(x) < \bar{u}\,\tau. 
\label{eq:farkas-ball}
\end{align}

\item If $\U=\Ubox$, the strict compatibility conditions \eqref{eq:CLFBoundary}--\eqref{eq:CBFBoundary} are equivalent to 
\begin{equation}
\label{eq:farkas-box}
\begin{aligned}
&\big(h(x)=1 \\
&\ \wedge\ \left\{
\begin{aligned}
&\lambda_1,\lambda_2,\ \lambda_j^\pm \ge 0 \quad (j=1,\dots,m),\\
&\lambda_1+\lambda_2+\sum_{j=1}^m(\lambda_j^+ + \lambda_j^-) = 1
\end{aligned}\right. \\
&\ \wedge\ \lambda_1 L_g V(x)+\lambda_2 L_g h(x)
          +\sum_{j=1}^m(\lambda_j^+ - \lambda_j^-)\,e_j = 0\big)\\
&\Longrightarrow\;
\lambda_1 L_f V(x)+\lambda_2 L_f h(x)
   < \sum_{j=1}^m(\lambda_j^+\overline u_j-\lambda_j^-\underline u_j).
\end{aligned}
\end{equation} 
Here \( e_j \in \mathbb{R}^{1 \times m} \) denotes the \( j \)-th standard basis row vector. 
\end{enumerate}
\end{proposition}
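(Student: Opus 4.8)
The plan is to argue pointwise in $x$ and then reattach the outer quantifier. Fix $x$ with $h(x)=1$. Writing $A$ for the $2\times m$ matrix whose rows are $L_gV(x)$ and $L_gh(x)$, and $b:=(-L_fV(x),-L_fh(x))^\top$, the strict compatibility conditions \eqref{eq:CLFBoundary}--\eqref{eq:CBFBoundary} say exactly that the strict linear system $Au<b$ admits a solution $u\in\U$. Each part of the proposition is then obtained by feeding this fixed system to the transposition lemma matching the geometry of $\U$, relabeling the resulting dual variables, and finally quantifying over all $x\in\partial\C$ to recover the displayed $\forall x\,\forall(\text{multipliers})$ implications, which are quantifier-free in $u$ and hence dischargeable by dReal.

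For the ball case I would invoke Lemma~\ref{lem:conic_farkas} with $B=\bar u$, since $\Uball$ is precisely $\set{u:\norm{u}_2\le B}$ and no reformulation of the input set is needed. With multipliers $\lambda=(\lambda_1,\lambda_2)\ge 0$, $\tau\ge 0$, and $y\in\Real^m$, one has $A^\top\lambda=\lambda_1 L_gV(x)^\top+\lambda_2 L_gh(x)^\top$, so the lemma's constraints $y+A^\top\lambda=0$, $\norm{y}_2\le\tau$, $\lambda_1+\lambda_2+\tau=1$ reproduce the premise of \eqref{eq:farkas-ball} verbatim, while the conclusion $\lambda^\top b+B\tau>0$ rearranges to $\lambda_1 L_fV(x)+\lambda_2 L_fh(x)<\bar u\,\tau$. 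This part is a direct substitution with no subtlety.

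For the box case I would encode $\Ubox$ from \eqref{eq:ubox} as the $2m$ inequalities $u_j\le\overline u_j$ and $-u_j\le-\underline u_j$, stack them beneath $Au<b$, and apply the strict Farkas Lemma~\ref{lem:strictfarkas}. Assigning $\lambda_1,\lambda_2$ to the CLF/CBF rows and $\lambda_j^+,\lambda_j^-$ to the upper and lower box faces (with constraint rows $e_j$ and $-e_j$), the stationarity condition becomes $\lambda_1 L_gV(x)+\lambda_2 L_gh(x)+\sum_{j}(\lambda_j^+-\lambda_j^-)e_j=0$, the normalization becomes $\lambda_1+\lambda_2+\sum_j(\lambda_j^++\lambda_j^-)=1$, and the inequality $\lambda^\top(\cdot)>0$ becomes $\lambda_1 L_fV(x)+\lambda_2 L_fh(x)<\sum_j(\lambda_j^+\overline u_j-\lambda_j^-\underline u_j)$, matching \eqref{eq:farkas-box}. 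The one genuinely non-mechanical point, and the step I expect to be the main obstacle, is that Lemma~\ref{lem:strictfarkas} governs an all-strict system whereas the box faces are non-strict. I would resolve this by showing that, when $\underline u_j<\overline u_j$ for every $j$ so that $\Ubox$ has nonempty interior, strict feasibility of $Au<b$ over the closed box is equivalent to strict feasibility over the open box: the set $\set{u:Au<b}$ is open, $\Ubox$ is the closure of its interior $\mathrm{int}(\Ubox)$, and an open set that meets the closure of a set necessarily meets the set itself, the reverse inclusion being immediate. After this reduction the stacked system is all-strict and Lemma~\ref{lem:strictfarkas} applies directly, completing the argument.
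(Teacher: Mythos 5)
Your proposal is correct and follows essentially the same route as the paper's proof: fix $x\in\partial\C$, form the $2\times m$ system $Au<b$ with rows $L_gV(x)$, $L_gh(x)$, apply Lemma~\ref{lem:conic_farkas} with $B=\bar u$ for the ball case, and stack the box faces beneath $Au<b$ and apply Lemma~\ref{lem:strictfarkas} for the box case, with the identical relabeling of multipliers. The only place you go beyond the paper is in explicitly justifying that strict feasibility over the closed box $\Ubox$ coincides with strict feasibility over its interior (via openness of $\set{u : Au<b}$ and $\Ubox=\overline{\mathrm{int}\,\Ubox}$ when $\underline u_j<\overline u_j$ for all $j$); the paper silently encodes the box faces as strict inequalities when writing $\tilde A(x)u<\tilde b(x)$, so your argument correctly fills that small, implicitly assumed step.
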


\begin{proof}
Fix $x$ with $h(x)=1$ and set
\[
A(x):=\begin{bmatrix}L_g V(x)\\[2pt] L_g h(x)\end{bmatrix}\in\Real^{2\times m},\qquad
b(x):=\begin{bmatrix}-L_f V(x)\\[2pt]-L_f h(x)\end{bmatrix}\in\Real^{2}.
\]

\noindent\textbf{(i) $\U=\Uball$.}
Strict compatibility means
\[
\exists\,u:\ A(x)u<b(x)\quad\text{and}\quad \|u\|_2\le\bar u.
\]
By Lemma~\ref{lem:conic_farkas} with $B=\bar u$, feasibility is equivalent to: for all
$\lambda=(\lambda_1,\lambda_2)^\top\!\ge0$, $\tau\ge0$, and $y\in\Real^m$ with
$y+A(x)^\top\lambda=0$, $\|y\|_2\le\tau$, $\lambda_1+\lambda_2+\tau=1$, one has
$\lambda^\top b(x)+\bar u\,\tau>0$, which is \eqref{eq:farkas-ball}.

\noindent\textbf{(ii) $\U=\Ubox$.} The feasibility problem can be written as
\[
\begin{bmatrix}
L_g V(x) \\[2pt]
L_g h(x) \\[2pt]
I_m \\[2pt]
-I_m
\end{bmatrix} u
<
\begin{bmatrix}
-\,L_f V(x) \\[2pt]
-\,L_f h(x) \\[2pt]
\overline u \\[2pt]
-\,\underline u
\end{bmatrix},
\qquad u\in\Real^m.
\]
Denote this compactly as $\tilde A(x)u<\tilde b(x)$.
By Lemma~\ref{lem:strictfarkas}, this system is feasible if and only if 
every nonnegative multiplier $\tilde\lambda\ge0$ satisfying 
$\tilde\lambda^\top\tilde A(x)=0$ and $\mathbf{1}^\top\tilde\lambda=1$ 
also satisfies $\tilde\lambda^\top\tilde b(x)>0$. Partition $\tilde\lambda$ as
\[
\tilde\lambda = (\lambda_1,\lambda_2,\lambda^+_1,\ldots,\lambda^+_m,\lambda^-_1,\ldots,\lambda^-_m)^\top. 
\]
The stationarity condition $\tilde\lambda^\top\tilde A(x)=0$ becomes
\[
\lambda_1 L_g V(x)+\lambda_2 L_g h(x)
   +\sum_{j=1}^m(\lambda^+_j-\lambda^-_j)e_j=0,
\]
and the dual inequality $\tilde\lambda^\top\tilde b(x)>0$ reads
\[
-\lambda_1 L_f V(x)-\lambda_2 L_f h(x)
   +\sum_{j=1}^m(\lambda^+_j\,\overline u_j-\lambda^-_j\,\underline u_j)>0,
\]
which rearranges directly to~\eqref{eq:farkas-box}.
\end{proof}

\subsection{Guaranteed patching of compatible CLF-CBF}

Theoretically, it has been shown that the existence of a strictly compatible CBF-CLF pair is equivalent to the existence of a single smooth control Lyapunov-barrier function (CLBF) (see~\cite{quartz2025converse}; see also~\cite{liu2025computing} for an alternative proof). We restate the following result from~\cite{quartz2025converse,liu2025computing}.

\begin{thm}[\cite{quartz2025converse,liu2025computing}]\label{thm:patch}
    Let $h$ and $V$ be strictly compatible as defined in Definition \ref{def:compatibility}. Suppose that the set $\C$ is compact. Then there exists a continuously differentiable function $W:\,\Real^n\ra\Real$ with the following properties:
    \begin{enumerate}
        \item $\mathcal{C}=\set{x\in\mathcal \Real^n \mid W(x)\le 1}$.
        \item For every $x \in \mathcal{C} \setminus \{0\}$, there exists $u\in\U$ such that  
    \begin{equation}\label{eq:CLFW}
        L_f W(x) + L_g W(x) u < 0.
    \end{equation}
    \end{enumerate}
\end{thm}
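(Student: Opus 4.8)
The plan is to build $W$ as a composition $W(x)=\Phi\bigl(V(x),h(x)\bigr)$ for a carefully designed smooth, coordinatewise nondecreasing function $\Phi:\Real^2\ra\Real$. The advantage of composing through $\Phi$ rather than forming a spatial convex combination of $V$ and $h$ with a cutoff is that the chain rule gives $\nabla W=\Phi_v\nabla V+\Phi_w\nabla h$, so that
\[
L_f W(x)+L_g W(x)u=\Phi_v\bigl(L_f V(x)+L_g V(x)u\bigr)+\Phi_w\bigl(L_f h(x)+L_g h(x)u\bigr),
\]
with $\Phi_v,\Phi_w\ge 0$. Thus the Lie derivative of $W$ along any $u$ is automatically a nonnegative combination of the Lie derivatives of $V$ and $h$, and there is no cross term from a cutoff gradient to control.

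First I would use compactness of $\C$ to set $V_{\max}:=\max_{x\in\C}V(x)<\infty$ and fix thresholds $0<\delta<1$. I would then construct $\Phi$ so that: (a) $\Phi(v,w)=w$ for all $w\ge 1-\delta$ (a boundary collar, where $W=h$); (b) $\Phi(v,w)=\psi(v)$ for $w$ below a lower threshold, with $\psi$ smooth, strictly increasing, $\psi(0)=0$ and $\psi(V_{\max})<1$ (an interior region, where $W$ inherits positive definiteness from $V$); and (c) a monotone interpolation on the intermediate band that keeps $\Phi_v,\Phi_w\ge0$, keeps $\Phi<1$ there, and never makes $\Phi_v$ and $\Phi_w$ vanish simultaneously. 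Property~1 then follows from the design of $\Phi$: near $\partial\C$ we have $W=h$, so $\set{W\le1}$ and $\C=\set{h\le 1}$ coincide in a neighborhood of the boundary and in the exterior ($w>1\Rightarrow W=w>1$); in the interior the bound $\psi(V_{\max})<1$ together with monotonicity gives $W<1$, so no spurious sublevel crossings appear. Positive definiteness of $W$ and $W(0)=0$ come from $W=\psi(V)$ near the origin.

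The substance is Property~2, which I would argue by location using the factorization above. Where $\Phi_w=0$ and $\Phi_v>0$ (the interior region), I choose $u\in\U$ from the CLF condition \eqref{eq:CLFV} of Definition~\ref{def:compatibility}(1); then $L_f W+L_g W u=\Phi_v(L_f V+L_g V u)<0$. Where $\Phi_v=0$ and $\Phi_w>0$ (the pure collar), I need only the CBF decrease $L_f h+L_g h u<0$, which holds on $\partial\C$ through \eqref{eq:CBFBoundary}. The genuinely hard part is the intermediate band, where both $\Phi_v>0$ and $\Phi_w>0$, so a single admissible $u$ must decrease $V$ and $h$ simultaneously. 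Strict compatibility supplies such a $u\in\U$ at each point of $\partial\C$ via \eqref{eq:CLFBoundary}--\eqref{eq:CBFBoundary}, and the key step is to propagate it inward.

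Because those two inequalities are strict, the set of $u\in\U$ witnessing them is open, and I would invoke compactness of $\partial\C$ (a closed subset of the compact set $\C$) together with continuity of $f$, $g$, $\nabla V$, and $\nabla h$ to extract a uniform collar width on which a common strictly decreasing $u\in\U$ persists. The final bookkeeping step is to choose $\delta$ small enough that the entire intermediate band (and collar) lies inside this uniform neighborhood of $\partial\C$, so that the compatible control is available exactly where it is needed. I expect this persistence-and-uniformity argument — matching the analytic collar on which compatibility survives to the level-set collar used in the definition of $\Phi$ — to be the main obstacle; the remaining verifications are routine consequences of the monotonicity of $\Phi$ and the smoothness of the data.
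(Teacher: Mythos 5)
Your plan is, despite the framing, essentially the paper's own construction viewed abstractly. The paper's bump $b$ depends on $x$ only through $h(x)$, say $b(x)=\beta(h(x))$, so the patched function $W=(1-b)V+bh$ in \eqref{eq:patched_W} is itself a monotone composition $W=\Phi(V,h)$ with $\Phi(v,w)=(1-\beta(w))\,v+\beta(w)\,w$: the ``cutoff cross term'' $(h-V)\,\beta'(h)\nabla h$ is a nonnegative multiple of $\nabla h$, i.e., it is exactly the $\Phi_w$-term of your chain rule, and its sign is controlled precisely because the paper first rescales $V$ so that $V\le h$ on the band where $\beta'>0$. So the contrast you draw between ``composition through $\Phi$'' and ``spatial convex combination with a cutoff'' is illusory, and your compactness/persistence step (strict inequalities at each point of $\partial\C$ persist, by continuity and a finite subcover, on a uniform collar chosen to contain the band) is exactly the paper's determination of $\partial\C_\eps^-$.

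The genuine gap is in the interaction of your conditions (b) and (c): with only $\psi(V_{\max})<1$, the monotone interpolation you postulate need not exist, and the step can fail. Continuity with region (a) forces $\Phi(v,\cdot)$ to run from $\psi(v)$ at the lower edge $w_0$ of the band up to $1-\delta$ at its top, so $\Phi_w\ge 0$ requires $\psi(v)\le 1-\delta$ for every value $v=V(x)$ attained in the interior region; $\psi(V_{\max})<1$ does not guarantee this, since $V$ may attain its maximum over $\C$ at a point with $h(x)\le w_0$. Even the borderline case $\psi(V_{\max})=1-\delta$ is contradictory: a nondecreasing $\Phi(V_{\max},\cdot)$ with equal endpoint values is constant on the band, which is incompatible with the $C^1$ matching $\Phi_w(\cdot,1-\delta)=1$ imposed by region (a). Concretely, for the natural interpolation $\Phi=(1-\beta(w))\psi(v)+\beta(w)\,w$ one computes $\Phi_w=\beta(w)+\beta'(w)\bigl(w-\psi(v)\bigr)$, whose nonnegativity demands $\psi(v)\le w$ wherever $\beta'>0$, i.e., $\psi(V_{\max})\le w_0$. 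That is precisely the rescaling step \eqref{eq:scale} of the paper (ensuring $V\le h$ on $\partial\C_\eps^-$), the very mechanism you believed your formulation avoided. The repair is immediate --- choose $\psi$, equivalently rescale $V$, so that $\psi(V_{\max})\le w_0$ --- and with that strengthening your argument goes through and reproduces the paper's proof.
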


In particular, we assume that \(\mathcal{C}\) is compact and follow the construction in~\cite{liu2025computing} to explicitly define \(W\) from \(V\) and \(h\).  
Fix \(\varepsilon \in (0,1)\) and define the inner boundary band
\[
\partial\mathcal{C}_{\varepsilon}^{-} := \{x \mid 1-\varepsilon \le h(x) \le 1\}.
\]
Strict compatibility of \(V\) and \(h\) on \(\partial\mathcal{C}\), together with compactness of \(\mathcal{C}\), implies that there exists \(\varepsilon > 0\) such that compatibility also holds on \(\partial\mathcal{C}_{\varepsilon}^{-}\). We can determine such an $\eps>0$ by an SMT solver \cite{liu2025computing}, by replacing $h(x)=1$ in (\ref{eq:farkas-ball}) and (\ref{eq:farkas-box}) with $1-\eps\le h(x)\le 1$. Moreover, without loss of generality, we may assume
\[
V(x) \le 1 \quad \forall x \in \mathcal{C},
\qquad
V(x) \le h(x) \quad \forall x \in \partial\mathcal{C}_{\varepsilon}^{-}.
\]
Otherwise, \(V\) can be rescaled by the factor
\begin{equation}\label{eq:scale}
\alpha = \frac{1-\varepsilon}{\max_{x\in \mathcal{C}} V(x)},
\end{equation}
as shown in~\cite{liu2025computing}.  
Define the smooth bump function
\[
b=
\begin{cases}
\exp\!\Bigl(-\dfrac{1}{\varepsilon^2 - (h(x)-1)^2} + \dfrac{1}{\varepsilon^2}\Bigr), &
1-\varepsilon < h(x) < 1,\\[4pt]
1, & h(x) \ge 1,\\
0, & h(x) \le 1-\varepsilon.
\end{cases}
\]
It is shown in~\cite{liu2025computing} that
\begin{equation}\label{eq:patched_W}
W(x) := (1-b(x))\,V(x) + b(x)\,h(x)
\end{equation}
satisfies the conclusions of Theorem~\ref{thm:patch} by construction.

\begin{rem}[Softmax relaxation of the safe set]
Following~\cite{liu2025computing}, we can approximate $h_{\max}$ by the smooth log-sum-exp function with temperature parameter $\tau>0$:
\begin{equation}
\label{eq:softmax}
h_{\mathrm{sm}}(x;\tau)\;:=\;\frac{1}{\tau}\,\log\!\Big(\sum_{i=1}^N e^{\tau\,h_i(x)}\Big).
\end{equation}
As discussed in \cite{liu2025computing}, we have
\[
\C:=\{h_{\mathrm{sm}}\le 1\}
\;\subseteq\;
\{h_{\max}\le 1\}
\;=\;
\mathcal{X}.
\]
In other words, the $1$-sublevel set of $h_{\mathrm{sm}}$ provides a guaranteed under-approximation of the safe set $\X$, and as $\tau\to\infty$, $\C$ can get arbitrarily close to $\X$. In the numerical examples, we simply set \( h(x) = h_{\mathrm{sm}}(x;\tau) \) for a fixed \(\tau > 0\) and verify, using Proposition~\ref{prop:cbf}, that \(h\) serves as a valid CBF under bounded controls. We then proceed to verify that there exists a Lyapunov function $V$ compatible with $h$ using Propositions  \ref{prop:clf} and \ref{prop:clbf-compat-bounded-control}. Once verified, the Lyapunov function $W$ defined in (\ref{eq:patched_W}) serves both as a certificate and as a tool for computing provably safe, stabilizing controllers. 
\end{rem}

\begin{rem}[Universal formulas for safe stabilization]
To extract a controller using \( W \) that satisfies the conclusions of Theorem~\ref{thm:patch}, we can directly apply the formula in~\cite{lin1991universal} for controls with a 2-norm bound and the formula in~\cite{leyva2013global} for controls with a hyperrectangle bound. Both formulas, by construction, yield a safe and stabilizing controller.
\end{rem}

\begin{figure}[h!]
    \centering
    \includegraphics[width=\linewidth]{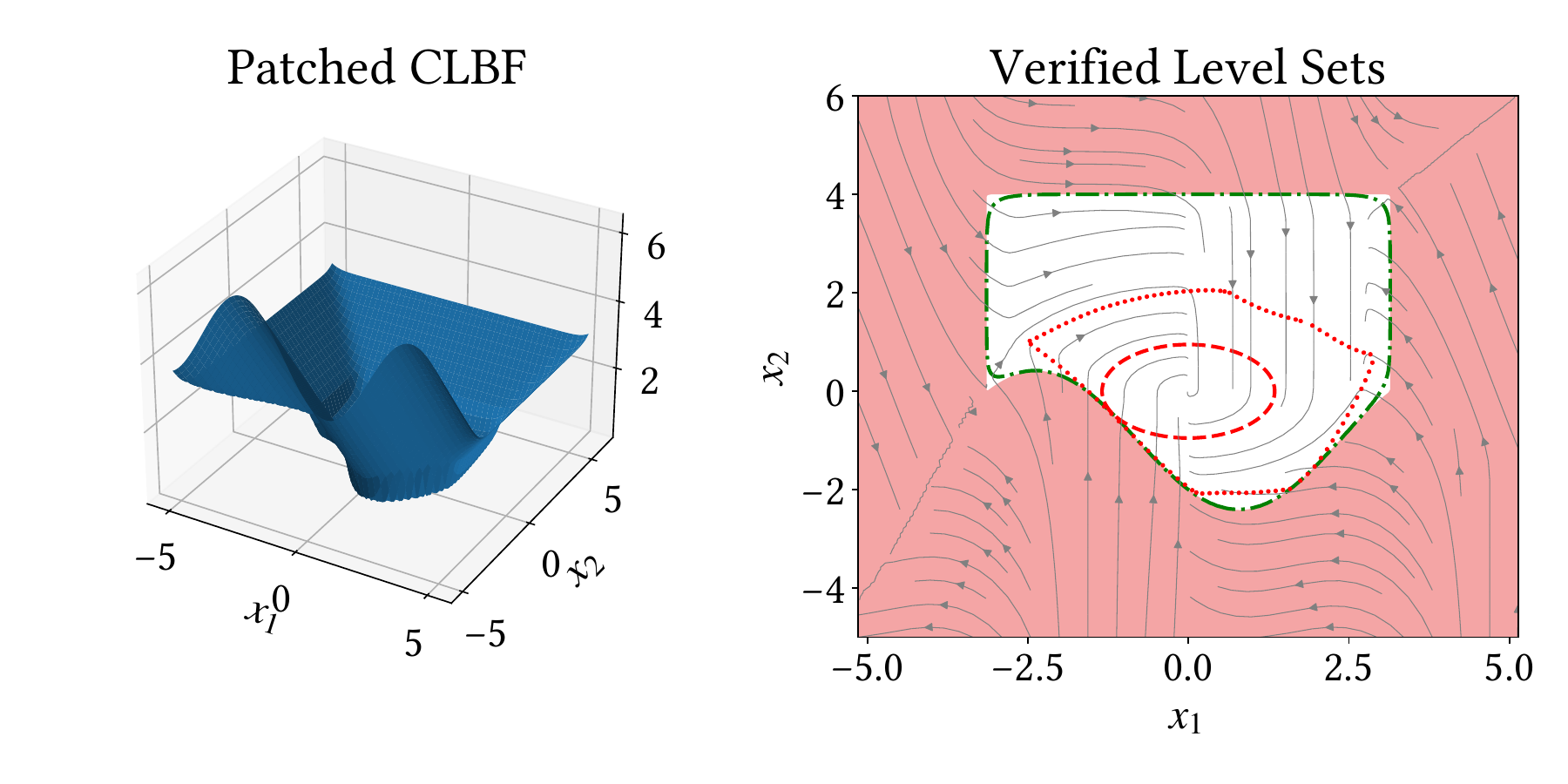}
    \caption{Formally verified smooth CLBF constructed by patching a strictly compatible CBF-CLF pair for Example~\ref{ex:2d_toy_verified}. The green dash-dotted curve marks the verified safe stabilization region under bounded inputs $u \in [-1,1]$, shown relative to the unsafe region (light red). For comparison, the dotted and dashed red curves indicate the largest verified regions from the SOS-based compatible CBF-CLF approach~\cite{dai2024verification} and a quadratic CLF, respectively.}
    \label{fig:2d_toy_verified}
\end{figure}

\begin{figure*}[h!]
    \centering
    \includegraphics[width=0.3\linewidth]{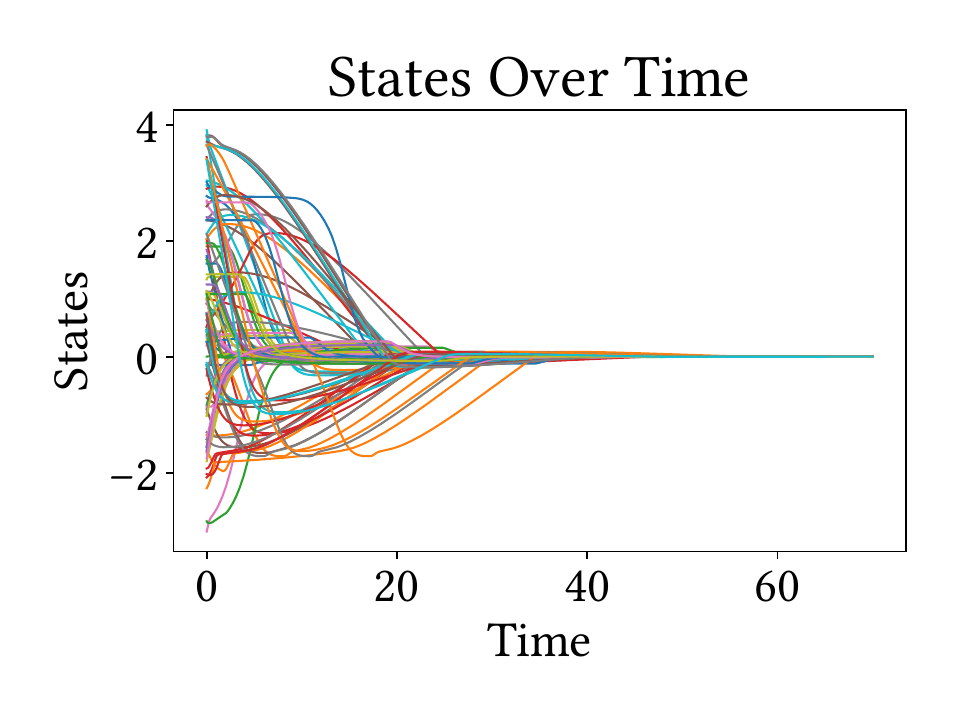}
    \includegraphics[width=0.3\linewidth]{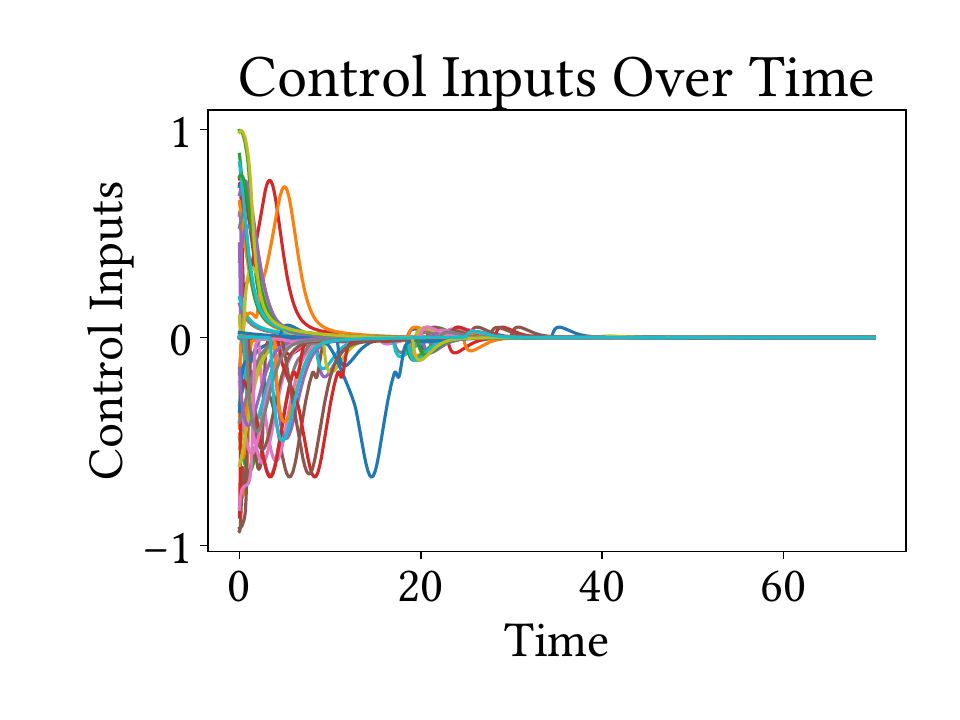}
    \includegraphics[width=0.3\linewidth]{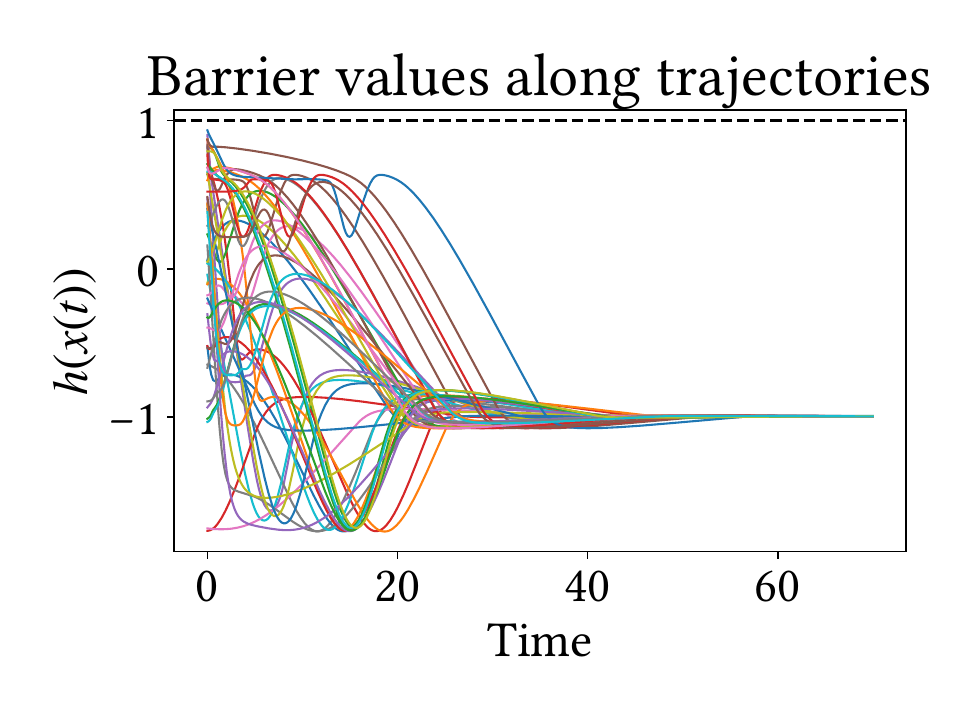}
    \caption{Simulated trajectories, control inputs, and corresponding barrier function evaluations for Example~\ref{ex:2d_toy_verified}. All trajectories converge, control input bounds are satisfied, and the barrier function values stay below the safety threshold of~1.}
    \label{fig:2d_toy_verified_trajectories}
\end{figure*}

\section{Numerical examples}\label{sec:examples}

We demonstrate the proposed method on two nonlinear control examples. All experiments were conducted on a server node equipped with an Intel(R) Xeon(R) Gold 6326 CPU @ 2.90,GHz (16 cores, 2 threads per core). The problems were solved using the LyZNet toolbox~\cite{liu2024tool}, with dReal~\cite{gao2013dreal} serving as the verification backend. The code is available at \url{https://git.uwaterloo.ca/hybrid-systems-lab/lyznet} under \texttt{examples/clbf-bounded-control}.

\begin{example}\label{ex:2d_toy_verified}
We consider the following toy example taken from~\cite{dai2024verification}: 
\[
    f(x) = \begin{bmatrix} 0 \\ -\sin x_1 \end{bmatrix},
    \quad
    g(x) = \begin{bmatrix} 1 \\ -1 \end{bmatrix},
\]
The state domain is set to $[-\pi,\pi] \times [-3,4]$, and the input domain to $\U=[-1,1]$. Following~\cite{liu2025computing}, we construct a smooth \emph{log-sum-exp} barrier function $h = h_{\mathrm{sm}}(x;\tau)$ as defined in~(\ref{eq:softmax}) with $\tau = 4.5$. Using the verifiable conditions established in Propositions \ref{prop:cbf}--\ref{prop:clbf-compat-bounded-control}, we formally verify that $h$ is a strict CBF and is strictly compatible with the quadratic CLF $V(x) = x_1^2 + 2x_2^2$ on the band $\partial\mathcal{C}_{\varepsilon}^{-}=\{x : 1 - \varepsilon \le h(x) \le 1\}$ with $\varepsilon = 0.5$ and a scaling factor $0.012814877271582478$, as defined in~(\ref{eq:scale}). Consequently, the function $W$ defined in~(\ref{eq:patched_W}) constitutes a valid control Lyapunov-barrier function. 

Figure~\ref{fig:2d_toy_verified} illustrates the patched CLBF and the phase portrait of the closed-loop system under the Lin-Sontag controller~\cite{lin1991universal}. Figure~\ref{fig:2d_toy_verified_trajectories} presents 50 simulated trajectories starting from the set $\mathcal{C} = \{h \le 1\}$, along with the evaluation of $h$ to demonstrate safety. The barrier function $h$ alone cannot serve as a Lyapunov function on $\mathcal{C}$, not only because it attains negative values but also because it has a stationary point at $x = (0.1294085,\, 0.94176161)^\top$ (numerically verified via gradient descent). Patching it with the quadratic CLF produces a provable CLBF, as demonstrated here. By construction, the control input constraints are satisfied. 
\end{example}

\begin{figure}[!ht]
    \centering
    \includegraphics[width=\linewidth]{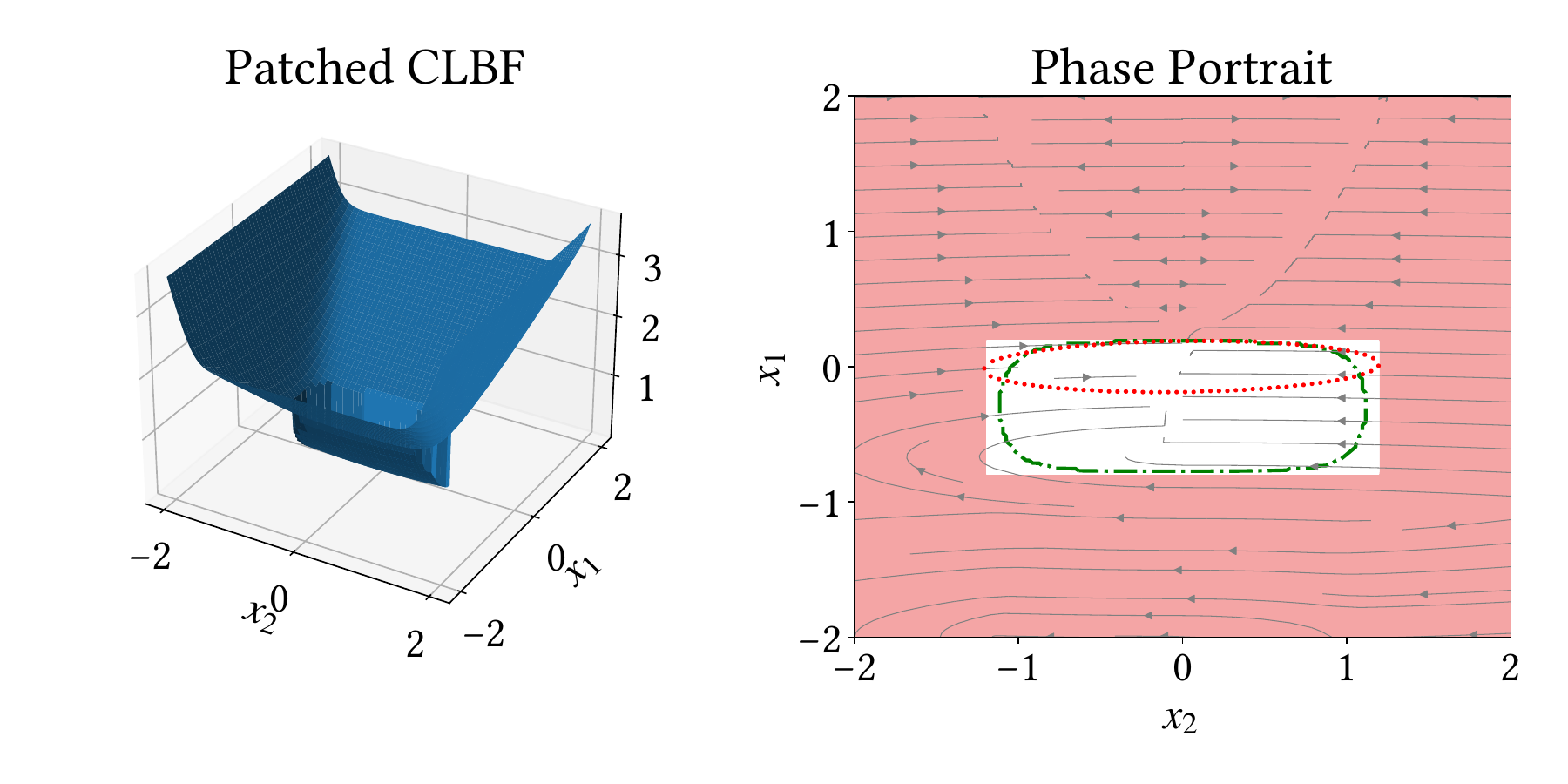}
    \caption{A formally verified CLBF for Example~\ref{ex:3d_power} subject to input constraint $\U=[-2,2]^2$. The green dash-dot curve represents the formally verified safe stabilization region, relative to the unsafe region (shaded in light red). For comparison, the best reported SOS CBF-CLF result \cite{dai2024verificationv1} is shown in dotted red.}
    \label{fig:3d_power}
\end{figure}

\begin{figure*}[!ht]
    \centering
    \includegraphics[width=0.3\linewidth]{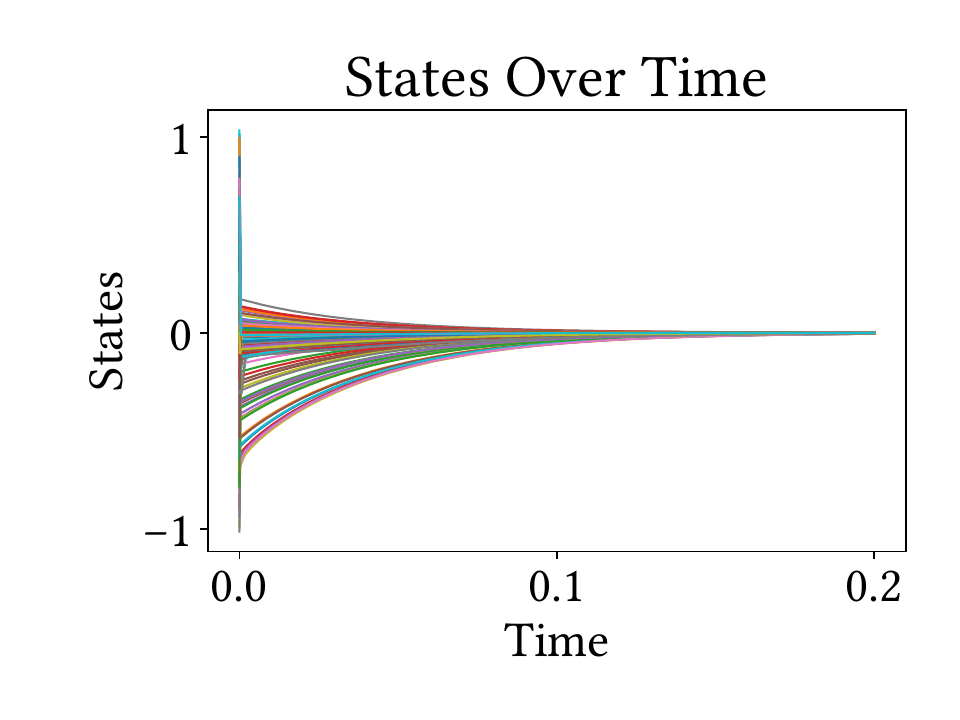}
    \includegraphics[width=0.3\linewidth]{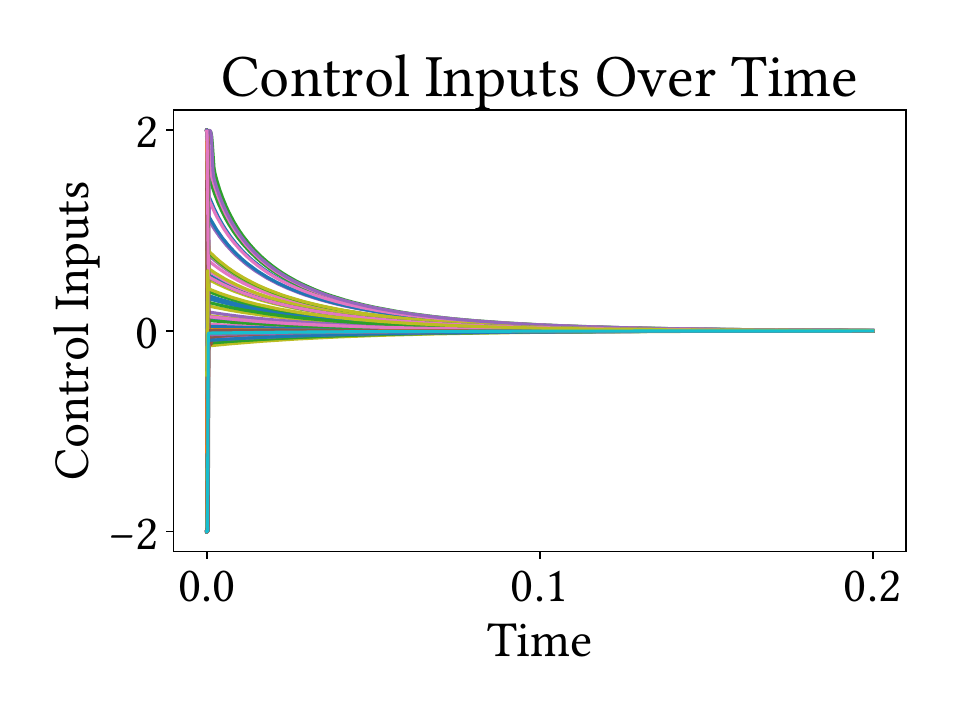}
    \includegraphics[width=0.3\linewidth]{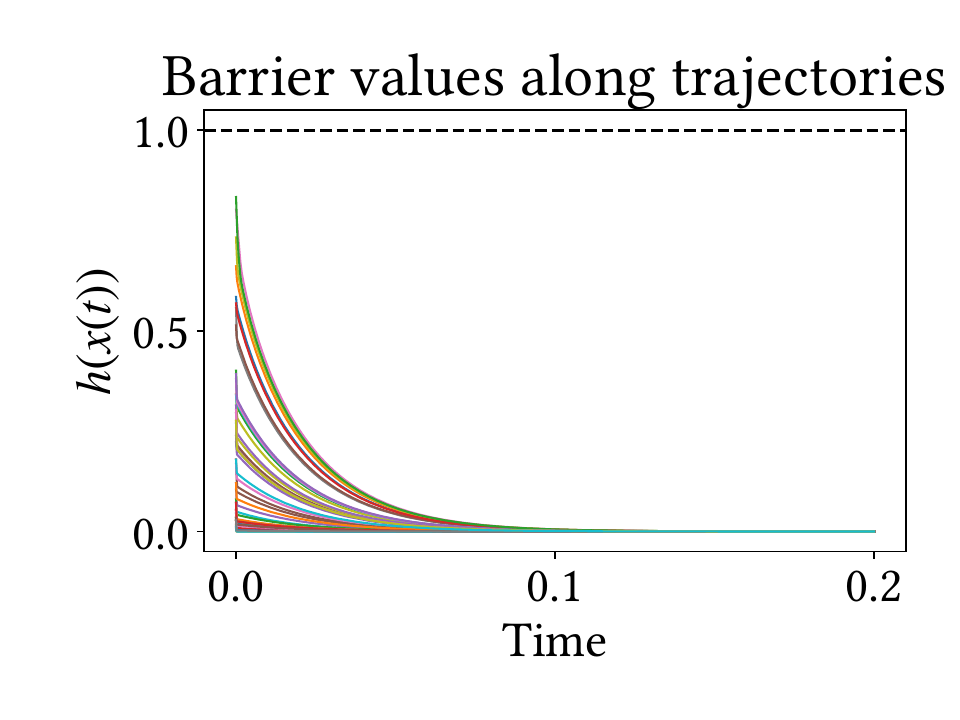}
    \caption{Simulated trajectories, control inputs, and corresponding barrier function evaluations for Example~\ref{ex:3d_power}. All trajectories converge, control input bounds are satisfied, and the barrier function values stay below the safety threshold of~1.}
    \label{fig:3d_power_verified_trajectories}
\end{figure*}

\begin{example}\label{ex:3d_power}
We consider the following model of a power converter, as considered in \cite{dai2024verificationv1}, which is a nonlinear control-affine system~\eqref{eq:sys} with
\[
f(x)=\begin{bmatrix}
-0.05x_1 - 57.9x_2 + 0.00919x_3\\[2pt]
1710x_1 + 314x_3\\[2pt]
-0.271x_1 - 314x_2
\end{bmatrix},
\]
and
\[
g(x)=\begin{bmatrix}
0.05 - 57.9x_2 & -57.9x_3\\[2pt]
1710 + 1710x_1 & 0\\[2pt]
0 & 1710 + 1710x_1
\end{bmatrix},
\]
on the domain $[-2,2]^3$. The unsafe set is specified by
\[
x_1 \le 0.2, 
\quad x_1 \ge -0.8, 
\quad (x_2-0.001)^2 + x_3^2 \le 1.2^2,
\]
together with the box constraints defining the domain. Here we also add an input constraint $\U=[-2,2]^2$. We construct a softmax barrier function $h=h_{\mathrm{sm}}$ with $\tau=3.1$ and verify strict compatibility with a quadratic CLF $V$ (obtained by local LQR) on the boundary band. The patched CLBF obtained from~\eqref{eq:patched_W} using Theorem~\ref{thm:patch} is depicted in Fig.~\ref{fig:3d_power}, and the resulting state and input trajectories under the controller by Leyva \textit{et al.} \cite{leyva2013global} are depicted in Fig.~\ref{fig:3d_power_verified_trajectories}. 
\end{example}

\textbf{Computational time:} All verification was performed using the SMT solver dReal \cite{gao2013dreal}. The computational times are summarized in Table \ref{tab:time} for reference. Compared with the results in \cite{liu2025computing}, and in view of the conditions formulated in Propositions \ref{prop:cbf}--\ref{prop:clbf-compat-bounded-control}, handling control constraints required introducing additional auxiliary variables during verification, which increased the verification time.

\begin{table}[h!]
  \caption{Verification times for numerical examples}
  \label{tab:time}
  \begin{tabular}{ccc}
    \toprule
Model &  CBF (sec) & CLF-CBF compatibility (sec) \\ 
    \midrule
Example \ref{ex:2d_toy_verified} (toy) & 0.07 & 166\\
Example \ref{ex:3d_power} (converter) & 0.05 & 75\\
\bottomrule
\end{tabular}
\end{table}

\section{Conclusions}

We presented a verification framework for synthesizing a single smooth Lyapunov function that certifies both safety and stability under bounded control inputs. The method is based on an explicit smooth patching construction derived from strictly compatible CBF-CLF pairs. Formal guarantees are established using $\delta$-complete SMT solvers, and validation on two nonlinear systems shows that the proposed approach yields less conservative certified regions than sum-of-squares-based alternatives. Future work will focus on scalable synthesis for higher-dimensional systems and compositional verification of interconnected control architectures.

\bibliographystyle{plain}        %
\bibliography{ecc26}

\appendices

\section{Proof of Lemma \ref{lem:conic_farkas}}

\noindent\textbf{Lemma~\ref{lem:conic_farkas} (restated).}
Let $A \in \Real^{n \times m}$, $b \in \Real^n$, and $B>0$.  
The system $A u < b$ has a solution satisfying $\|u\|_2 \le B$ if and only if for every $\lambda \in \Real^n_{\ge 0}$, $\tau \in \Real_{\ge 0}$, and $y \in \Real^m$ satisfying 
$y + A^\top \lambda = 0$, $\|y\|_2 \le \tau$, and $\sum_{i=1}^n \lambda_i + \tau = 1$ also satisfies $\lambda^\top b + B \tau > 0.$

\begin{proof}
\textbf{($\Longrightarrow$)} 
Let $u_\star$ satisfy $A u_\star < b$ and $\|u_\star\|_2 \le B$.
Take any $\lambda \ge 0$, $\tau \ge 0$, $y$ with $y + A^\top\lambda = 0$, $\|y\|_2 \le \tau$, and $\sum_i \lambda_i + \tau = 1$.

\emph{Case 1: $\lambda = 0$.} Then $y=0$ and $\tau=1$, so
\[
\lambda^\top b + B\tau = B > 0.
\]

\emph{Case 2: $\lambda \neq 0$.} Since $b - A u_\star \in \mathbb{R}^n_{>0}$ and $\lambda \in \mathbb{R}^n_{\ge 0}\setminus\{0\}$,
\[
\lambda^\top(b - A u_\star) > 0.
\]
By Cauchy-Schwarz, we have 
\[
y^\top u_\star \le \|y\|_2\,\|u_\star\|_2 \le \tau \|u_\star\|_2.
\]
Using this, $y = -A^\top \lambda$, and $\|u_\star\|\le B$, we have 
\begin{align*}
\lambda^\top b + B\tau & \ge \lambda^\top b + \tau\|u_\star\|_2 \\
& \ge \lambda^\top b + y^\top u_\star = \lambda^\top(b - A u_\star) \;>\; 0.
\end{align*}
Thus in both cases $\lambda^\top b + B\tau > 0$.

\textbf{($\Longleftarrow$)}  Define the convex sets
\[
C := \{(A u - b,\, \|u\|_2) : u \in \mathbb{R}^m\} \subset \mathbb{R}^{n} \times \mathbb{R},
\]
and
\[
K := \mathbb{R}^n_{<0} \times (-\infty, B].
\]
The constraint $\exists u\in\Real^m$ such that $A u < b$ and $\|u\|_2 \le B$ is equivalent to
$C \cap K \neq \emptyset$. We prove the contrapositive. Suppose no $u$ satisfies
$A u < b$ and $\|u\|_2 \le B$. Then $C \cap K = \emptyset$.
By the \emph{strong hyperplane separation theorem}. By the separating hyperplane theorem,
there exist $(\lambda,\tau)\neq(0,0)$ and $\beta\in\mathbb{R}$ such that
\begin{equation}
    \label{eq:K}
\lambda^\top s+\tau t \;\le\; \beta \quad \forall (s,t)\in K,
\end{equation}
and
\begin{equation}
    \label{eq:C}
\lambda^\top (Au-b)+\tau\|u\|_2 \;\ge\; \beta \quad \forall u\in\mathbb{R}^m.
\end{equation}
Because $K=\{(s,t): s\in \mathbb{R}^n_{<0},\ t\le B\}$ is unbounded toward $s\to -\infty$ (componentwise) and $t\to -\infty$, the inequality (\ref{eq:K}) forces
\begin{equation}
    \label{eq:beta_1}
\lambda\ge 0,\quad \tau\ge 0,
\quad
\beta \;\ge\; \sup_{(s,t)\in K}(\lambda^\top s+\tau t) \;=\; B\tau.
\end{equation}
On the other hand, write
\[
\lambda^\top (Au-b)+\tau\|u\|_2
= -\lambda^\top b + (A^\top\lambda)^\top u + \tau\|u\|_2.
\]
For the lower bound $\beta$ in (\ref{eq:C}) to hold for all $u$, we must have
\[
\inf_{u\in\Real^m}\big\{(A^\top\lambda)^\top u + \tau\|u\|_2\big\} \;>\; -\infty
\quad\Longleftrightarrow\quad
\|A^\top\lambda\|_2 \le \tau,
\]
in which case the infimum equals $0$ (attained at $u=0$). Therefore
\begin{equation}
    \label{eq:beta_2}
-\lambda^\top b \;\ge\; \beta.
\end{equation}
Combining (\ref{eq:beta_1}) and (\ref{eq:beta_2}) leads to
\[
\lambda^\top b + B\tau \;\le\; 0.
\]
Set $y:=-A^\top\lambda$ so that $y+A^\top\lambda=0$ and $\|y\|_2=\|A^\top\lambda\|_2\le\tau$.
Normalize by a positive scalar so that $\sum_i \lambda_i+\tau=1$.
This yields a triple $(\lambda, \tau, y)$ satisfying all the conditions in the hypothesis, but for which $\lambda^\top b + B\tau > 0$ fails to hold. Hence, by contrapositive, $C\cap K\neq\emptyset$, i.e., there exists $u\in\Real^m$ with $Au<b$ and $\|u\|_2\le B$.
\end{proof}

\end{document}